\theoremstyle{plain}
\newtheorem{theorem}{Theorem}[section]
\newtheorem{lemma}[theorem]{Lemma}
\newtheorem{proposition}[theorem]{Proposition}
\newtheorem{conjecture}[theorem]{Conjecture}
\theoremstyle{definition}
\theoremstyle{remark}
\newtheorem{remark}[theorem]{Remark}
\renewcommand{\H}{\mathcal{H}}
\newcommand{\K}{\mathcal{K}}
\newcommand{\X}{\mathcal{X}}
\newcommand{\Y}{\mathcal{Y}}
\newcommand{\Z}{\mathcal{Z}}
\newcommand{\I}{\mathds{1}}
\newcommand{\id}{\operatorname{Id}}
\begin{document}
\title{A Counterexample to the Optimality Conjecture in Convex Quantum Channel Optimization}

\author{Jianting Yang}
\affiliation{State Key Laboratory of Mathematical Sciences, Academy of Mathematics and Systems Science, Chinese Academy of Sciences}
\orcid{0009-0000-2357-2114}
\email{yangjianting@amss.ac.cn}
\maketitle

\begin{abstract}
 This paper presents a counterexample to the optimality conjecture for the  trace distance based optimal state transformation problem  proposed by Coutts et al. [Quantum 5, 448 (2021)]. The conjecture posits that for the  trace distance based optimal state transformation problem, the dual certificate of an optimal solution is uniquely determined via the spectral calculus of the Choi matrix. By constructing a counterexample in 2-dimensional Hilbert spaces, we disprove this conjecture. 
\end{abstract}
\section{Introduction}
The optimization problems  over quantum channels and measurements are  fundamental to quantum information theory.  Previous studies have established optimality conditions for various quantum information optimization problems, ranging from the optimal quantum measurement problem~\cite{HOLEVO1973337,1055351} to  semidefinite programming formulations for quantum state discrimination~\cite{1193807}. Mathematically, many of these optimization problems can be formulated as convex optimization problems, especially those defined over the semidefinite cone that involve minimizing the nuclear norm (trace norm) of the Choi matrix. While numerical methods such as semidefinite programming (SDP) are available to solve these problems, verifying the optimality of the solutions remains a significant theoretical and computational challenge.

The certification of optimality is an important component in the field of optimization. For general convex optimization problems, the Karush-Kuhn-Tucker (KKT) conditions offer a general method to verify optimality, typically requiring the computation of dual problems. However, in some special cases, including group synchronization~\cite{MR4590243,MR3982681}, semidefinite programming~\cite{MR4914032}, and quantum channel optimization~\cite{Coutts2021certifying}, the certification of optimality without solving dual problems has become an effective approach. These methods use properties of primal solutions, avoiding computational costs for dual problems.

This paper considers the optimality conditions for the trace distance based  optimal state transformation problem. 
Coutts et al.  conjectured that the dual certificate could be uniquely determined through the spectral calculus of the Choi matrix of the optimal channel \cite[Conjecture~11]{Coutts2021certifying}.  They provided optimality conditions based on the subdifferential of the trace norm,  proving that when the error matrix of an optimal channel is full-rank, the dual certificate is uniquely determined via the spectral calculus of the Choi matrix. However, the general case where the error matrix is a singular (rank-deficient) matrix still remains unsolved. In this case, the subdifferential is no longer a singleton. Based on  numerical experiments, the authors conjectured that a  specific operator by setting the sign of zero eigenvalues to zero  would  provide  a necessary and sufficient condition for optimality. If this  conjecture is true, this would imply that the optimality of a quantum channel could always be certified by a  spectral calculus efficiently. By introducing additional symmetry to the trace distance optimization problem, we reduce the nuclear norm minimization problem over the  $4 \times 4$ positive semidefinite cone  to an equivalent 2-dimensional  linear programming (LP) problem.    This reduction allows us to construct a counterexample, where the conjectured spectral condition fails to provide a dual certificate.
    
%\subsection{Related Works}
%\paragraph{Optimality Conditions in Quantum Information} 
A key theoretical approach in this work is the adoption  of symmetry to simplify the optimization problems, which is an important approach for solving large-scale optimization problems.  When  an optimization problem admits symmetric structures, it can be reduced to a simpler one  via group theory and operator algebra theory.  In the fields of polynomial optimization and matrix optimization, symmetry reduction methods have been extensively studied~\cite{MR2067190,MR3029481}.
\subsection{Contributions}
This work provides a negative answer to the conjecture by Coutts et al. Our contributions are summarized as follows:
 \paragraph{Symmetry Reduction:} We propose a symmetry reduction approach that transforms the optimization problem from the high-dimensional cone of completely positive maps to a low-dimensional linear programming (LP) problem. This dimensional reduction substantially narrows the search space of counterexamples.
  \paragraph{Counterexample to the Conjecture:} An explicit counterexample is constructed to the conjecture proposed by Coutts et al.~\cite{Coutts2021certifying}. This construction shows that the conjectured spectral calculus of the Choi matrix  is insufficient to certify optimality.

	\section{Notations and Preliminaries}
	Let $\mathcal{H}$ and $\mathcal{K}$ be finite-dimensional complex  Hilbert spaces, $\mathcal{H} \otimes \mathcal{K}$ their tensor product, $L(\mathcal{H})$   the set of bounded linear operators acting on $\mathcal{H}$, $\mathsf{D}(\mathcal{H}) \subseteq L(\mathcal{H})$ the set of density operators on $\mathcal{H}$,   $\mathsf{C}(\mathcal{H}, \mathcal{K})$ the set of completely positive   maps from $L(\mathcal{H})$ to $L(\mathcal{K})$,  and $\I_{\mathcal{H}} \in L(\mathcal{H})$ is the identity operator. Moreover, $\id_{L(\mathcal{H})}: L(\mathcal{H}) \to L(\mathcal{H})$ denotes  the identity map on $L(\mathcal{H})$. Let  $\operatorname{Tr}_{\mathcal{H}}: L(\mathcal{H} \otimes \mathcal{K}) \to L(\mathcal{K})$ be the partial trace over $\mathcal{H}$, i.e., for any operator $X = \sum_i A_i \otimes B_i \in L(\mathcal{H} \otimes \mathcal{K})$ with $A_i \in L(\mathcal{H})$ and $B_i \in L(\mathcal{K})$ 
	\[\operatorname{Tr}_{\mathcal{H}}(X) = \sum_i \operatorname{Tr} (A_i) B_i,\]
	where $\operatorname{Tr}(\cdot)$ denotes the standard trace. For a completely positive   map $\Phi \in \mathsf{C}(\H,\K)$, define its Choi representation $J(\Phi)\in L(\K\otimes\H)$  as
	\begin{equation}\label{equ:def-J}
		J(\Phi) \coloneq \sum_{i,k=0}^{d-1} \Phi\left(\ket{i}_\H\bra{k}\right) \otimes \ket{i}_\H\bra{k},
	\end{equation} 
	where $\{\ket{i}_{\H}\}_{i=0}^{d-1}$ is a given orthonormal basis of $\H$, and   $\{\ket{i}_\H\bra{k}\}_{i,k=0}^{d-1}$ is the canonical basis of the space of linear operators acting on $\H$.
	
	In \cite{Coutts2021certifying}, the authors have the following conjecture regarding the optimality condition of a quantum channel:
	\begin{conjecture}[{\cite[Conjecture~11]{Coutts2021certifying}}]\label{conj:main} Let $\X$, $\Y$, and $\Z$ be finite-dimensional complex Hilbert spaces, for given  density operators $\rho \in \mathsf{D}(\X\otimes \Z)$ and $\sigma \in \mathsf{D}(\Y\otimes \Z)$. Let $\Phi \in \mathsf{C}(\X, \Y)$ be a completely positive map, define $Y$ by the following spectral calculation:
		\[Y=\operatorname{sign}\left(\sigma-(\Phi\otimes \id_{\Z} )(\rho)\right),\]
		then $\Phi$ is an optimal solution to the nuclear norm minimization problem  
		\begin{equation}\label{opt:org}
			\begin{aligned}
				\min_{\Phi} \quad & \| \sigma-(\Phi\otimes \id_{\Z} )(\rho)\|_{*} \\
				\text{s.t.} \quad &\Phi \in \mathsf{C}(\X, \Y),
			\end{aligned}
		\end{equation}
		if and only if the operator 
        \begin{equation}\label{equ:def:H}
           H=\left(\id_{L(\Y)}\otimes \Psi_{\rho}^* \right)(Y) 
        \end{equation}
         satisfies  $\operatorname{Tr}_{\Y}(HJ(\Phi)) \in \operatorname{Herm}(\X)$
        and 
		\begin{equation}\label{equ:Conj}
			\ H \succeq \I_{\Y}\otimes \operatorname{Tr}_{\Y}(HJ(\Phi)),
		\end{equation}
		where $\Psi_{\rho}\in \mathsf{C}(\X,\Z)$ is the completely positive map satisfying  the condition that for all $\Phi \in  \mathsf{C}(\X,\Y)$,
		\begin{equation}\label{equ:psi}
			(\Phi\otimes \id_{L(\Z)})(\rho)=(\id_{L(\Y)}\otimes \Psi_{\rho})(J(\Phi)),
		\end{equation}
        and $\Psi_{\rho}^*$ is its adjoint operator.
	\end{conjecture}

    Moreover, in \cite[Corollary~10]{Coutts2021certifying}, the authors proved that when $\sigma-(\Phi\otimes \id_{\Z} )(\rho)$ is a full-rank matrix,  the conjecture holds true. Therefore, this conjecture is a natural extension of their result to the general case.

	\section{Reduction via Symmetry Structure}\label{sec:2}
	Problem~\eqref{opt:org} is defined over the cone of completely positive  operators,  making it challenging to directly construct a counterexample. In this section, we introduce several specific symmetry structures to the Problem~\eqref{opt:org},  which allow  us to reduce the original  optimization problem from a high-dimensional  positive semidefinite cone to a lower-dimensional linear programming (LP) problem.

	Let $\X$, $\Y$, and $\Z$ be 2-dimensional complex Hilbert spaces. To explicitly distinguish their bases, these bases are explicitly distinguished by denoting the basis of $\X$ as    $\{ \ket{0}_{\X},\ket{1}_{\X}\}$, and similarly for  $\Y$ and $\Z$. 

    \subsection{Reduction via the Symmetry of the Density Operator}
	Define $\rho_0 \in \mathsf{D}(\X \otimes \Z)$ with
	\begin{equation}\label{equ:def-rho}
		\rho_0 \coloneq \frac{1}{2}\sum_{i=0}^1\sum_{j=0}^1 \ket{i}_{\X}\bra{j}\otimes \ket{i}_{\Z}\bra{j}.
	\end{equation} 
	Then we have the following result for such $\rho_0$.
	\begin{proposition}\label{prop:rho_0}
	 Let $\rho_0$ be defined as in \eqref{equ:def-rho}   and $\Psi_{\rho_0}$ be the associated completely positive map. Then  for any $\Phi \in \mathsf{C}(\X, \Y)$,  
	 	\begin{equation}\label{equ:Id-tensor-Psi-is-half}
	 	(\id_{L(\Y)}\otimes \Psi_{\rho_0})(J(\Phi))=\frac{1}{2}J(\Phi)
	 \end{equation}
	 under  the linear space isomorphism $\mathcal{I}_{\X\to\Z}: \X\to\Z$ defined by $\ket{i}_\X \mapsto \ket{i}_\Z$.
	\end{proposition}
	\begin{proof}
		For any $\Phi \in \mathsf{C}(\X, \Y)$, by \eqref{equ:def-J}, we have
		\begin{equation}\label{equ:JPhi}
			J(\Phi)=\sum_{i=0}^1 \sum_{j=0}^1 \Phi(\ket{i}_{\X}\bra{j}) \otimes\ket{i}_{\X}\bra{j}
		\end{equation}
		and by \eqref{equ:psi} we also have 
		\begin{equation}\label{equ:Id-tensor-Psi}
        \begin{aligned}
			(\Phi\otimes \id_{L(\Z)})(\rho_0)&=(\id_{L(\Y)}\otimes \Psi_{\rho_0})(J(\Phi)) \\
            &= \sum_{i,j=0}^1 \Phi(\ket{i}_{\X}\bra{j})  \otimes 
			 \frac{1}{2}\ket{i}_{\Z}\bra{j}.
         \end{aligned}
		\end{equation}
By comparing \eqref{equ:JPhi} and \eqref{equ:Id-tensor-Psi}, one can obtain 
		\begin{equation*}
			(\id_{L(\Y)}\otimes \Psi_{\rho_0})(J(\Phi))=\frac{1}{2}J(\Phi)
		\end{equation*}
		under  the linear space isomorphism $\mathcal{I}_{\X\to\Z}$.
	\end{proof}

	Clearly, for $\rho_0$ defined in \eqref{equ:def-rho} and any $\sigma \in \mathsf{D}(\Y \otimes \Z)$, the nuclear norm optimization problem~\eqref{opt:org} can be reformulated as follows.
	\begin{equation*}
		\begin{aligned}
			\min_{\tilde{X}} \quad & \| \sigma -(\id_{L(\Y)}\otimes \Psi_{\rho_0})(X)\|_{*} \\
			\text{s.t.} \quad &\tilde{X} \in J(\mathsf{C}(\X, \Y)).
		\end{aligned}
	\end{equation*}

By Proposition~\ref{prop:rho_0}, the map $(\mathrm{Id}_{L(\Y)} \otimes \Psi_{\rho_0})$ is a scalar multiplication under the isomorphism. Formally, let $\I_{\Y}\otimes \mathcal{I}_{\X\to\Z}$ be the linear isomorphism mapping $\Y\otimes\X$ to $\Y\otimes\Z$, with $\ket{i}_{\Y}\otimes \ket{j}_{\X} \to \ket{i}_{\Y}\otimes \ket{j}_{\Z}$. 

Since
\begin{align*}
J(\mathsf{C}(\X, \Y)) = \left\{ \tilde{X} \in L(\Y \otimes \X);\ \tilde{X}\succeq 0, \right. \notag
\\
\left. \operatorname{Tr}_{\Y}(\tilde{X}) = \I_{\X} \right\},
\end{align*}
the optimization problem~\eqref{opt:org} is equivalent to:
\begin{equation}\label{eq:opt_rigorous}
    \begin{aligned}
        \min_{\tilde{X}\in L(\Y \otimes \X)} \quad & \left\| \sigma - \frac{1}{2} \mathcal{U} \tilde{X}\mathcal{U}^\dag \right\|_* \\
        \text{s.t.} \quad &  \tilde{X}\succeq 0, \\
                          & \operatorname{Tr}_{\Y}(\tilde{X}) = \I_{\X},
    \end{aligned}
\end{equation}
where $\mathcal{U}=\I_{\Y}\otimes \mathcal{I}_{\X\to\mathcal{Z}}$.

For simplicity, one may equivalently define the variable  
\[ {X} = \left( \I_{\Y}\otimes \mathcal{I}_{\X\to\mathcal{Z}} \right)\tilde{X}\left( \I_{\Y}\otimes \mathcal{I}_{\X\to\Z} \right)^\dag \in  L(\Y \otimes \Z).\]
Then the  problem~\eqref{eq:opt_rigorous}  simplifies to
\begin{equation}
    \begin{aligned}\label{opt:main}
        \min_{X \in     L(\Y \otimes \Z)} \quad & \left\| \sigma - \frac{1}{2} {X} \right\|_* \\
        \text{s.t.} \quad &   {X}\succeq 0\\
                          & \operatorname{Tr}_{\Y}({X}) = \I_{\Z}.
    \end{aligned}
\end{equation}
In the remainder of this work, we no longer distinguish between $X$ and $\tilde{X}$. 

\subsection{Reduction via Group Invariance}
Group invariance is commonly utilized to simplify optimization problems. In this subsection, we use this property to further simplify Problem~\eqref{opt:main}. When $\sigma$ is a diagonal matrix, we have the following result.
		\begin{proposition}\label{prop:group-inv}
		If $\sigma$ is a diagonal matrix, then problem~\eqref{opt:main} admits a diagonal optimal solution.
	\end{proposition}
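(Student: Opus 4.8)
The plan is to run a standard symmetrization (pinching) argument resting on two facts: the nuclear norm is invariant under conjugation by a unitary, and a diagonal $\sigma$ is fixed by conjugation by any diagonal unitary.

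First I would translate the feasibility constraint into a cone condition. By Choi's theorem, $X \in J(\mathsf{C}(\X,\Y))$ if and only if $X \succeq 0$ as an operator on $\Y \otimes \X$ (identifying $\Z$ with $\X$ through the isomorphism $\ket{i}_\X \mapsto \ket{i}_\Z$). Thus the feasible set is exactly the positive semidefinite cone, which is convex and invariant under every map $X \mapsto U X U^*$ with $U$ unitary. I would also record that an optimal solution exists: the objective $X \mapsto \|\sigma - \frac12 X\|_{*}$ is continuous and coercive on the PSD cone, since $\|\sigma - \frac12 X\|_{*} \ge \frac12 \|X\|_{*} - \|\sigma\|_{*}$ forces its sublevel sets to be bounded, hence compact, so the minimum is attained at some $X^\star \succeq 0$.

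Next, fix the product basis $\{\ket{a}_\Y \ket{b}_\X\}$ in which $\sigma$ is diagonal and let $\mathcal{P}$ be the pinching map that deletes all off-diagonal entries in this basis. The structural input is that $\mathcal{P}$ is a finite average of conjugations by diagonal unitaries: with $Z = \operatorname{diag}(1,\omega,\omega^2,\omega^3)$ and $\omega = e^{2\pi i/4}$ one has
\[
\mathcal{P}(X) = \frac{1}{4}\sum_{j=0}^{3} Z^{j} X (Z^{j})^{*},
\]
because the $(m,n)$ entry picks up the factor $\frac14\sum_{j} \omega^{j(m-n)} = \delta_{mn}$. Each $Z^{j}$ is diagonal, so $Z^{j} \sigma (Z^{j})^{*} = \sigma$.

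Finally I would verify that $\mathcal{P}(X^\star)$ is a diagonal optimal solution. It is feasible because it is an average of the PSD operators $Z^{j} X^\star (Z^{j})^{*}$, hence PSD, and diagonal by construction. For optimality, using $\sigma = \mathcal{P}(\sigma)$,
\[
\sigma - \tfrac12 \mathcal{P}(X^\star) = \frac{1}{4}\sum_{j=0}^{3} Z^{j}\!\left(\sigma - \tfrac12 X^\star\right)\!(Z^{j})^{*},
\]
so the triangle inequality together with unitary invariance of the nuclear norm gives $\|\sigma - \frac12 \mathcal{P}(X^\star)\|_{*} \le \|\sigma - \frac12 X^\star\|_{*}$. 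Hence $\mathcal{P}(X^\star)$ attains the optimal value and is diagonal, proving the claim. The only step I expect to require genuine care is the first one, namely identifying $J(\mathsf{C}(\X,\Y))$ with the PSD cone via Choi's theorem and confirming its invariance under diagonal conjugation; the averaging estimate and the cancellation of off-diagonal entries are then routine.
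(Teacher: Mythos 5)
Your symmetrization machinery itself (pinching as an average of diagonal-unitary conjugations, invariance of a diagonal $\sigma$, triangle inequality plus unitary invariance of the nuclear norm) is sound, but your very first step misidentifies the feasible set, and this leaves a genuine hole. In this paper $J(\mathsf{C}(\X,\Y))$ is not the full PSD cone: despite the loose wording in the preliminaries, $\mathsf{C}(\X,\Y)$ is the set of quantum channels (completely positive \emph{and} trace-preserving maps), so the feasible set is $\{X \succeq 0 : \operatorname{Tr}_\Y(X) = \mathbf{1}\}$. That the trace constraint must be present is forced by the rest of the paper: on the full PSD cone the point $X = 2\sigma \succeq 0$ would be feasible with objective value $0$, so the Lemma's claimed optimum (value $0.4$) and the entire counterexample would collapse; moreover the conjectured condition $H \succeq \mathbf{1}_\Y\otimes\operatorname{Tr}_\Y(HJ(\Phi))$ is precisely the dual structure attached to the trace-preserving constraint. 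The paper's own proof accordingly checks that its group action preserves the condition $\operatorname{Tr}_\Y(X)=\mathbf{1}_\Z$.

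Because of this misidentification, your feasibility check of $\mathcal{P}(X^\star)$ is incomplete: you verify positivity but never that $\operatorname{Tr}_\Y(\mathcal{P}(X^\star)) = \mathbf{1}$, and this is not automatic --- it is exactly the delicate point. Conjugation by a general diagonal unitary of $\Y\otimes\X$ does \emph{not} preserve the partial-trace constraint: with $U = \operatorname{diag}(1,1,1,-1)$ in the ordering $\ket{a}_\Y\ket{b}_\X$, a feasible $X$ with $X_{(00),(01)} = t = -X_{(10),(11)}$ (e.g.\ the Choi matrix of the Hadamard channel, where $t = 1/2$) has $\operatorname{Tr}_\Y(X)_{01}=0$ but $\operatorname{Tr}_\Y(UXU^*)_{01}=2t \neq 0$. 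This is precisely why the paper's group $G$ carries the extra relation $u_{00}u_{10}=u_{01}u_{11}$: it restricts to diagonal sign matrices that factor as tensor products of local ones, and only such product unitaries preserve the constraint. Your proof is repairable because your particular $Z$ happens to factor, $Z = \operatorname{diag}(1,\omega^2)\otimes\operatorname{diag}(1,\omega)$, hence each $Z^j$ is a product of local diagonal unitaries and $\operatorname{Tr}_\Y\bigl(Z^j X (Z^j)^*\bigr) = B^j \operatorname{Tr}_\Y(X)(B^j)^* = \mathbf{1}$ for feasible $X$, where $B=\operatorname{diag}(1,\omega)$ acts on the untraced factor. Once you add this verification (which also renders your coercivity argument unnecessary, since the true feasible set is compact), your pinching argument becomes a correct complex-phase analogue of the paper's $\pm 1$ sign-group averaging.
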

	\begin{proof}
		This result follows from group invariance. Let  $G$ be the subgroup of diagonal unitary matrices defined as
        \begin{equation}
            \begin{aligned}
                G = \left\{ \sum_{i,j=0}^1u_{i,j}\ket{j}_\Y\bra{j} \otimes \ket{i}_\Z\bra{i} \in L(\Y)\otimes L(\Z) ;\right.\\ \left. u_{ij}\in \{-1,1\},u_{00}u_{10}=u_{01}u_{11}   \right\}.
            \end{aligned}
        \end{equation}
		Define the group action of an element \(U \in G\) on $L(\Y\otimes \Z)$ as \(U(X) = U X U^\dag\). One can verify that 
		\begin{itemize}
			\item If $X \succeq 0$, then $U(X) \succeq 0$.
			\item If $\operatorname{Tr}_\Y(X)=\I_\Z$, then  $\operatorname{Tr}_\Y(U(X))=\I_\Z$.
			\item $U(\sigma)=\sigma$ and thus $\|\sigma-U(X)\|_*=\|\sigma-X\|_*$.
		\end{itemize}
		That is, if $X$ is a feasible solution to problem \eqref{opt:main}, then $U(X)$ is also a feasible solution, and the values of the objective function are equal. Thus, if $X$ is an optimal solution to the problem \eqref{opt:main}, then the matrix
		\[X_G\coloneq\frac{1}{|G|}\sum_{U \in G} U(X)\]
		is also an optimal solution, and it is invariant under group action, that is, $U(X_G) = X_G$ for all $U \in G$.

		For  $i,j,k,l\in \{0,1\}$ with $\ket{i}_\Y\ket{j}_\Z \neq \ket{k}_\Y\ket{l}_\Z  $ there exists an element 
		$U \in G$ such that 
		$$U \ket{i}_\Y\ket{j}_\Z=\ket{i}_\Y\ket{j}_\Z \quad $$
        and
        $$\quad  U \ket{k}_\Y\ket{l}_\Z=-\ket{k}_\Y\ket{l}_\Z,$$
		then         
        \begin{equation*}
         \begin{aligned}
            &\operatorname{Tr}( X_G \ket{k}_\Y  \bra{i} \otimes \ket{j}_\Z  \bra{l} )\\&=\operatorname{Tr}( UX_GU^\dag  \ket{k}_\Y  \bra{i} \otimes \ket{j}_\Z  \bra{l} )\\
            &=\operatorname{Tr}( X_GU^\dag \ket{k}_\Y  \bra{i}\otimes \ket{j}_\Z  \bra{l} U )\\
            &=-\operatorname{Tr}( X_G  \ket{k}_\Y  \bra{i} \otimes \ket{j}_\Z  \bra{l} ),
            \end{aligned}
        \end{equation*}
		which implies that $X_G$ is a diagonal matrix.
	\end{proof}
	This result shows that when $\sigma$ is a diagonal matrix, constraining the feasible set to diagonal matrices reduces the original nuclear norm minimization problem~\eqref{opt:main} to an equivalent   $\ell_1$-norm minimization problem on vectors. 
	\begin{remark}
	    The existence of a diagonal optimal solution proved in Proposition~\ref{prop:group-inv} can be extended to any convex, $G$-invariant objective function $f: L(\mathcal{Y} \otimes \mathcal{Z}) \to \mathbb{R}$ i.e., $f(UXU^\dag) = f(X)$ for any $X$ and $U\in G$.
	\end{remark}
	 Note that, for any diagonal matrix $X$, the spectral calculus reduces to applying the function directly to the diagonal entries. Specifically, if $X=\operatorname{diag}(x_{11},x_{22},\dots,x_{nn})$, then $\operatorname{sign}(X)=\operatorname{diag}(\operatorname{sign}(x_{11}),\operatorname{sign}(x_{22}),\dots,\operatorname{sign}(x_{nn}))$.  
     
     \section{Counterexample}	
     Using the symmetry reduction established in Section~\ref{sec:2}, we significantly narrow the search space for a counterexample. In this section, we  will construct a pair of diagonal matrices $\sigma$ and $X$ to disprove Conjecture~\ref{conj:main}. The main result is as follows:
	\begin{theorem}
	There exists a 2-dimensional  optimal state transformation problem where the optimal solution $\Phi$ violates the conjectured optimality conditions in Conjecture~\ref{conj:main}. Moreover, it satisfies that
	$$ \I_{\Y}\otimes \operatorname{Tr}_{\Y}(HJ(\Phi)) \not\succeq H, $$
    and
    $$\quad H \not\succeq \I_{\Y}\otimes \operatorname{Tr}_{\Y}(HJ(\Phi)). $$
		\end{theorem}
        
     Now we are ready to construct the counterexample, the  counterexample is constructed via numerical methods\footnote{
      The source code for constructing and verifying the counterexample is available at github.com/jty-AMSS/Quantum-Channel-Optimality-Conjecture. AI tools were used to assist in the development and refinement of these scripts.}.
	
	\begin{lemma}\label{lem:con-example}
	
	Let $\rho_0$ be defined as in \eqref{equ:def-rho},    define $\sigma$ by %\footnote{We note that such counterexamples can also be found by numerical methods. The code is available at github.com/jty-AMSS/Quantum-Channel-Optimality-Conjecture. This numerical counterexample was refined by the AI tool Doubao for readability.} %\footnote{
   % \cyan{We acknowledge the assistance of AI tools Doubao in refining the numerical values of the counterexample for presentation clarity. We note that such counterexamples can also be found by standard numerical methods.}}
	\[
	\begin{aligned}
        &0.55 \ket{0}_\Y \bra{0}\otimes \ket{0}_\Z\bra{0}+
        0.2\ket{0}_\Y \bra{0} \otimes \ket{1}_\Z \bra{1}+\\
		&0.15 \ket{1}_\Y\bra{1}\otimes  \ket{0}_\Z \bra{0} 
        +0.1\ket{1}_\Y \bra{1} \otimes \ket{1}_\Z \bra{1},
	\end{aligned}
	\] 
	and $X^*$ by
	\[
	\begin{aligned}
		&
        1\ket{0}_\Y\bra{0} \otimes  \ket{0}_\Z \bra{0}
        +0.4 \ket{0}_\Y  \bra{0} \otimes \ket{1}_\Z \bra{1} +\\
		& 0.6 \ket{1}_\Y  \bra{1} \otimes \ket{1}_\Z \bra{1}.%0  \ket{1}_\Y  \bra{1} \otimes \ket{0}_\Z \bra{0} +
	\end{aligned}
	\]
	Then $X^*$ is an optimal solution to the problem~\eqref{opt:main}. Consequently, the corresponding completely positive map $\Phi$ is also an optimal solution to the original problem \eqref{opt:org}. However, $\Phi$ does not satisfy the optimality conditions in \eqref{equ:Conj}.
\end{lemma}
\begin{proof}
	Since $\operatorname{Tr}_\Y(X^*)=\I_\Z$, $X^*$ is a feasible solution. %one can verify that $X$ is an optimal solution to~\eqref{opt:main}. 
One can directly verify the  optimality of $X^*$, by parameterizing the feasible set of \eqref{opt:main} with two real parameters $\alpha, \beta \in [0,1]$. Specifically, the diagonal entries of any feasible diagonal matrix can be written as 
\begin{equation*}
    X = \sum_{i=0}^1 \sum_{j=0}^1 x_{ij} \ket{i}_\Y\bra{i} \otimes \ket{j}_\Z\bra{j},
\end{equation*}
with $x_{00}=\alpha, x_{10}=1-\alpha$ and $x_{01}=\beta, x_{11}=1-\beta$. 

Consequently, the objective function reduces to a scalar function $f(\alpha, \beta)$ on $[0,1]\times[0,1]$:
\begin{equation}
\begin{aligned}
f(\alpha, \beta)=  &\left\| \sigma - \frac{1}{2}X \right\|_1 \\
=& \left( \left| 0.55 - \frac{\alpha}{2} \right| + \left| \frac{\alpha}{2} - 0.35 \right| \right) + \\
&\left( \left| 0.2 - \frac{\beta}{2} \right| + \left| \frac{\beta}{2} - 0.4 \right| \right) \\
\geq& |0.55 - 0.35| + |0.2 - 0.4| = 0.4.
\end{aligned}
\end{equation}
Since the constructed $X^*$ (with $\alpha=1, \beta=0.4$) attains this lower bound exactly, it is optimal. 
    
	Define 
	\[
	\begin{aligned}
		\Delta \coloneqq \sigma-\frac{1}{2}X^*=&0.05
        \ket{0}_\Y \bra{0} \otimes \ket{0}_\Z \bra{0}
        +\\&0.15\ket{1}_\Y\bra{1} \otimes  \ket{0}_\Z \bra{0}+ \\
		& (-0.2)\ket{1}_\Y\bra{1} \otimes \ket{1}_\Z \bra{1}.
	\end{aligned}
	\]
	Since $\Delta$ is a diagonal matrix, its spectral calculation is as follows:
	\[
	\begin{aligned}
		Y \coloneqq&	\operatorname{sign}(\Delta)\\
        =&\ket{0}_\Y \bra{0} \otimes
         \ket{0}_\Z \bra{0} 
         +\ket{1}_\Y\bra{1}\otimes \ket{0}_\Z \bra{0}+\\
		& (-1) \ket{1}_\Y\bra{1} \otimes \ket{1}_\Z\bra{1}.
	\end{aligned}
	\]
	Since $(\id_{L(\Y)}\otimes \Psi_{\rho_0})(X^*)=\frac{1}{2}X^*$, its adjoint operator is $(\id_{L(\Y)}\otimes \Psi_{\rho_0}^*)(Y)=\frac{1}{2}Y$. As proposed in~\eqref{equ:def:H}, let $H\coloneq (\id_{L(\Y)}\otimes \Psi_{\rho_0}^*)$  then
	\begin{equation}\label{H:compute}
		\begin{aligned}
			 H=&0.5\ket{0}_\Y \bra{0} \otimes\ket{0}_\Z \bra{0}
            +0.5 \ket{1}_\Y\bra{1}\otimes  \ket{0}_\Z\bra{0}+\\
            & (-0.5)\ket{1}_\Y\bra{1} \otimes \ket{1}_\Z \bra{1},
		\end{aligned}
	\end{equation}
	and 
    \[
\begin{aligned}
HX^* =& 0.5\ket{0}_\Y\bra{0}\otimes\ket{0}_\Z\bra{0}+ \\
    &(-0.3)\ket{1}_\Y\bra{1}\otimes\ket{1}_\Z\bra{1},
\end{aligned}
\]
	then 
	\[\begin{aligned}
		\operatorname{Tr}_{\Y}(HX^*)= 0.5 \ket{0}_\Z\bra{0} -0.3 \ket{1}_\Z\bra{1},
	\end{aligned}
	\]
	and thus 
	\begin{equation}\label{1Y-tensor-H:compute}
		\begin{aligned}
			\I_{\Y}\otimes \operatorname{Tr}_{\Y}(HX^*) =&0.5\ket{0}_\Y \bra{0}\otimes \ket{0}_\Z \bra{0} +\\
            &(-0.3) \ket{0}_\Y \bra{0} \otimes\ket{1}_\Z\bra{1}+\\
			&0.5\ket{1}_\Y\bra{1}\otimes \ket{0}_\Z \bra{0}+\\
            &(-0.3) \ket{1}_\Y\bra{1} \otimes \ket{1}_\Z\bra{1}.
		\end{aligned}
	\end{equation}
By comparing the coefficients of terms $\ket{1}_\Y \bra{1}\otimes \ket{1}_\Z \bra{1}$ and $\ket{0}_\Y\bra{0}\otimes \ket{1}_\Z \bra{1}$ in \eqref{H:compute} and \eqref{1Y-tensor-H:compute}, we obtain that
	$$ \I_{\Y}\otimes \operatorname{Tr}_{\Y}(HX^*) \not\succeq H \quad \text{and} \quad H \not\succeq \I_{\Y}\otimes \operatorname{Tr}_{\Y}(HX^*). $$
   
	\end{proof}
		\section{Conclusion}
        In this paper, we study the optimality conditions for trace distance based
optimal state transformation problem.  While the prior research confirmed dual certificate is uniquely determined via spectral calculus when the error matrix is full-rank, the efficiently verifiable optimality conditions for  the rank-deficient (singular) error matrices have remained an open question.
        
        By exploiting the problem's symmetry structures, we reduce the space of quantum channels  to a diagonal subspace and construct a counterexample disproving the conjecture in \cite{Coutts2021certifying}.  Our counterexample  shows  that the conjectured spectral condition is insufficient to certify the optimality in general case. 
        
        It is worth noting that in \cite{Coutts2021certifying}, the authors reported that dual certificates computed via the spectral sign function are valid in their numerical experiments. Our results provide a  counterpoint by presenting an explicit case where this specific spectral calculus  fails. Consequently, when the error matrix is rank-deficient, the information obtained from spectral calculus alone is insufficient to certify optimality, as the dual certificate may exist elsewhere within the subdifferential set.

 \section*{Acknowledgments}
 We acknowledge Tianshi Yu and Lihong Zhi for their insightful discussions and verification of the results during the preparation of this manuscript.  
This research work is supported by the Postdoctoral Fellowship Program of CPSF under Grant Number GZC20252039, the National Key R\&D Program of China (2023YFA1009401) and  the Basic Science Center Program (No: 12288201) of the National Natural Science Foundation of China.

\bibliographystyle{quantum}
\bibliography{opt}% Produces the bibliography via BibTeX.

\end{document}